\documentclass[prx,twocolumn,floatix,amssymb,amsmath, superscriptaddress,longbibliography]{revtex4-2}
\usepackage{graphicx}
\usepackage{epsfig}
\usepackage{amsmath,amssymb,amsthm}
\usepackage[utf8]{inputenc}
\usepackage[dvipsnames]{xcolor}
\usepackage[colorlinks, urlcolor=blue, linkcolor=red]{hyperref}
\usepackage{soul,xcolor}
\usepackage{soul}
\usepackage{float} 
\setstcolor{red}
\newcommand{\beqa}{\begin{eqnarray}}
\newcommand{\eeqa}{\end{eqnarray}}

\renewcommand{\boxed}[2]{\textcolor{#1}{%
\tikz[baseline={([yshift=-1ex]current bounding box.center)}] \node [rectangle, minimum width=1ex,rounded corners,draw] {\normalcolor\m@th$\displaystyle#2$};}}
\DeclareMathOperator{\Tr}{Tr[]}
\newcounter{appsection}
\newcounter{appsubsection}[appsection]

\usepackage{physics}
\usepackage{bookmark}
\usepackage[normalem]{ulem}

\newcommand\redsout{\bgroup\markoverwith{\textcolor{red}{\rule[0.5ex]{2pt}{2pt}}}\ULon}
\newcommand\bluesout{\bgroup\markoverwith{\textcolor{blue}{\rule[0.5ex]{2pt}{2pt}}}\ULon}
\newcommand\greensout{\bgroup\markoverwith{\textcolor{green}{\rule[0.5ex]{2pt}{2pt}}}\ULon}
\newcommand{\overbar}[1]{\mkern 1.5mu\overline{\mkern-1.1mu#1\mkern-0.3mu}\mkern 1.5mu}

\DeclareMathOperator{\arcsinh}{arcsinh}

\newtheorem*{result*}{Main Result}

\newtheorem*{theo*}{Theorem}

\begin{document}

\title{Hybrid Analog Teleportation–Direct Transmission in Noisy Bosonic Channels}

\author{Uesli Alushi}
\email{uesli.alushi@aalto.fi}
\affiliation{Department of Information and Communications Engineering, Aalto University, Espoo 02150, Finland}
\author{Simone Felicetti}
\email{simone.felicetti@cnr.it}
\affiliation{Institute for Complex Systems, National Research Council (ISC-CNR), Via dei Taurini 19, 00185 Rome, Italy}
\affiliation{Physics Department, Sapienza University, P.le A. Moro 2, 00185 Rome, Italy}
\author{Roberto Di Candia}
\email{rob.dicandia@gmail.com}
\affiliation{Department of Information and Communications Engineering, Aalto University, Espoo 02150, Finland}
\affiliation{Dipartimento di Fisica, Universit\`a degli Studi di Pavia, Via Agostino Bassi 6, I-27100 Pavia, Italy}
\begin{abstract}
Quantum teleportation uses a shared entangled resource, local operations, and a digitally error-corrected classical channel to transfer quantum states between distant parties. We introduce a hybrid teleportation-direct transmission protocol for state transfer that still exploits entanglement, but replaces classical communication and digital error correction with an {\it analog} feedforward through a noisy quantum channel. We show that quantum teleportation outperforms this protocol if the communication channel reduces the entanglement of all bipartite states having the same amount of entanglement as the resource; otherwise, the hybrid protocol is optimal. We apply our result to the state transfer of a uniformly distributed coherent-states codebook, highlighting experimentally relevant scenarios where our protocol is most effective. Our findings are directly relevant to both optical and superconducting microwave channels, where analog feedforward techniques have been recently implemented.
\end{abstract}

\maketitle

\section{Introduction}
Quantum teleportation allows two parties situated in distant locations, Alice and Bob, to
transfer an unknown quantum state using entanglement, local operations, and a classical communication channel as resources~\cite{Bennett1993,Vaidman1994, Braunstein1998}. The protocol consists of four steps: (i) A third party (Charlie) prepares an entangled pair and sends one share to Alice and the other to Bob; (ii) Alice performs a joint (Bell) measurement of the unknown state and her part of the entangled state; (iii) Alice sends the measurement results to Bob via a classical communication channel; (iv) Bob performs a local operation on his part of the entangled state, depending on Alice classical message. Since the early theoretical proposals, experimental implementations of quantum teleportation have been demonstrated on several quantum platforms~\cite{pirandola2015advances,hu2023}. The performance dependence on steps (i)-(ii) and (iv) has already been well studied in the literature: The protocol is limited by the amount of pre-shared entanglement, the quantum memory of Bob, the fidelity of the Bell measurement at Alice, and the noise of the decoding operation at Bob~\cite{olivares2003,pirandola2015advances,hu2023}. Usually, step (iii) is treated as noise-free, as one can perform digital error correction, which means that the classical information can be transmitted virtually without errors. Nevertheless, a fundamental question remains: Is quantum teleportation always the best choice for state transfer, or can alternative strategies surpass it over noisy {\it quantum} channels?

In this paper, we answer this question in a continuous-variable setting. We investigate an analog protocol that leverages the quantum nature of the communication channel by replacing digital error correction in quantum teleportation with an analog error correction scheme. In this approach, the Bell measurement is replaced by an encoding operation implemented via a quantum-limited two-mode squeezer. The encoded output is then transmitted directly to Bob, who applies a decoding operation to retrieve the unknown quantum state, see Fig.~\ref{Fig1}. For {\it infinite} encoding squeezing, the analog protocol is equivalent to quantum teleportation, and we call it ``Analog Quantum Teleportation''. When no encoding is applied, the protocol represents direct state transfer. For {\it finite} encoding squeezing, it is a hybrid teleportation-direct transmission protocol (HTDT).

\begin{figure}[t!]
    \centering
    \includegraphics[width=.45\textwidth]{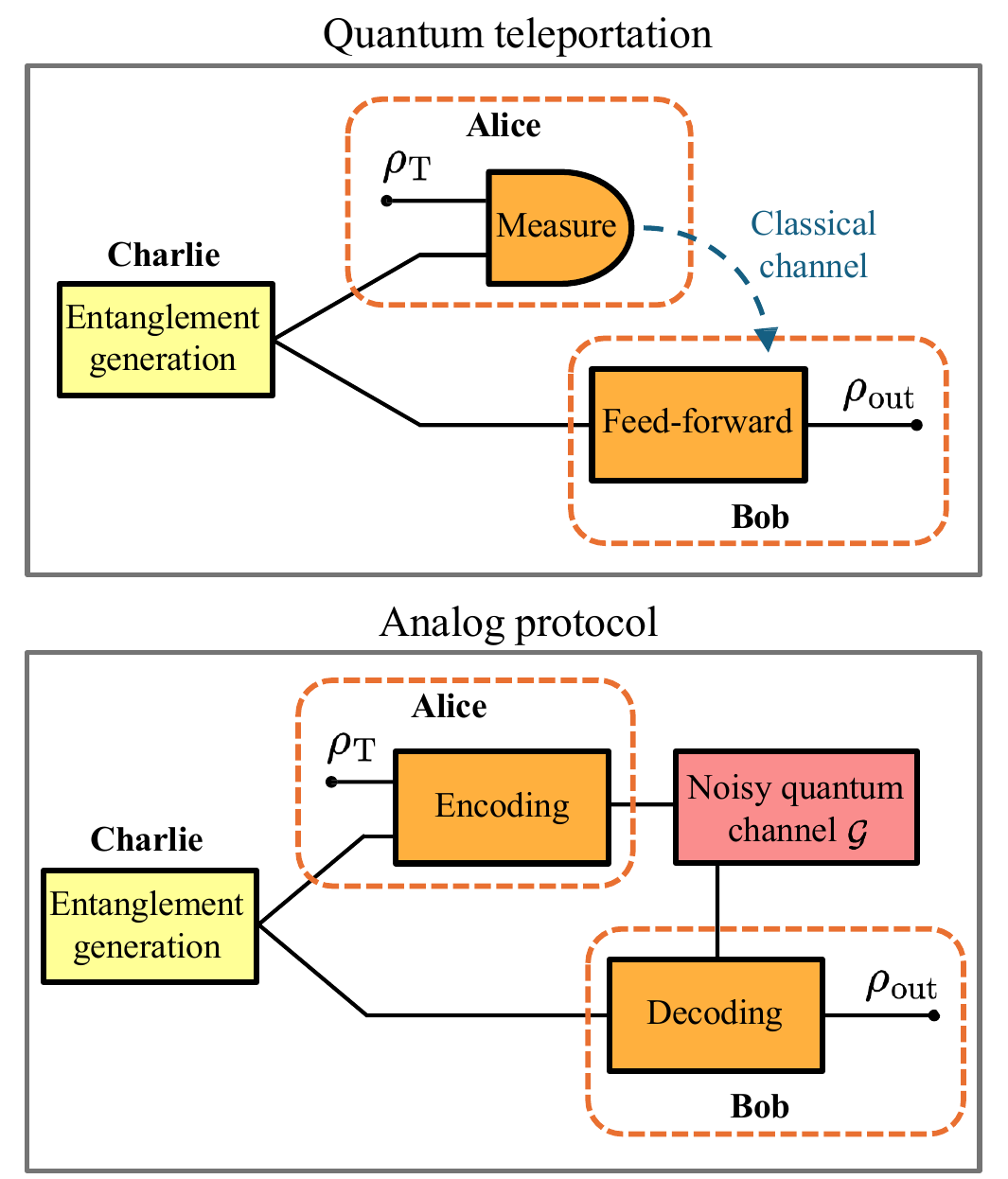}
    \caption{\textbf{Sketch of quantum teleportation and the analog protocol}. Both protocols leverage an entangled resource state, distributed by Charlie, to transfer a state from Alice to Bob obliviously. In the analog protocol, we exploit the quantum description of the communication channel \(\mathcal{G}\) to implement an optimal encoding/decoding scheme, which replaces the measure/feedforward operations of quantum teleportation.}
    \label{Fig1}
\end{figure}

We first frame the problem within the broader context of Gaussian channel simulation, where quantum teleportation corresponds to the case of noisy simulation of a phase-insensitive channel~\cite{Adesso2017,tserkis2018simulation}. Here, the goal is to reproduce the effect of a given channel with minimal added noise. We identify a necessary and sufficient condition for HTDT to be optimal, based solely on the communication channel parameters and the entanglement of the resource, quantified by the logarithmic negativity~\cite{Vidal2002,Plenio2005}.
We showcase the performance of the protocol for a codebook of uniformly distributed coherent states, considering both cases in which the entanglement distribution channel is effectively noiseless compared to the communication channel, and the ``fair'' case where all channels have the same loss rate. In both scenarios, our HTDT protocol can offer a significant advantage over quantum teleportation when entanglement resources are finite.


So far, analog feedforward has mainly been regarded as a technical tool to speed up protocol implementations \cite{Ralph1999, Dicandia2015}, as the absence of measurements avoids analog-to-digital and digital-to-analog conversions and reduces the waiting time at Bob’s side. Recent experiments in the optical~\cite{Liu2020Orbital, Liu2024} and microwave~\cite{Fedorov2021experimental,yam2025quantum,Abdo2025} regimes have demonstrated analog feedforward in the strong-squeezing limit to implement quantum teleportation. In this paper, we study scenarios in which the optimal encoding squeezing is \emph{finite}, corresponding to an HTDT protocol, and provide its optimal value.
This result is relevant for solid-state quantum technologies, such as superconducting quantum circuits connected via cryolinks~\cite{walraff2020, yam2023}, enabling kilometer-scale state transfer and outperforming quantum teleportation for realistic parameters, with direct applications in modular quantum computing~\cite{eisert2000,bravyi2022} and quantum communication~\cite{usenko2025}.

The paper is organized as follows: in Section~\ref{gaussian} we introduce the Gaussian formalism for continuous-variable quantum mechanics; in Section~\ref{analog}, we describe our analog protocol and prove a necessary and sufficient condition for HTDT to outperform quantum teleportation; in Section~\ref{opt_state_main}, we discuss the transfer of a uniformly distributed codebook of coherent states, considering both the case of lossless entanglement distribution and the ``fair'' case in which the entanglement distribution and communication channels have the same loss rate; finally, in Section~\ref{discussion} we conclude.

\section{ Gaussian formalism}\label{gaussian}
An \textit{m}-mode bosonic system~\cite{ AlessioSerafini} is described by the self-adjoint canonical operators \(\hat{x}_j\) and \(\hat{p}_j\), with \({j=1,...,m}\),  satisfying the canonical commutation relations. If we define \({\hat{r}=(\hat{x}_1,\hat{p}_1,...,\hat{x}_m,\hat{p}_m)^T}\), the canonical commutation relations read \({[\hat{r}_j,\hat{r}_k]=i\Omega_{jk}}\),
 where \({\Omega=\bigoplus_{j=1}^{m} i\sigma_y}\) is the symplectic form. Here and in the following, \(\sigma_x\), \(\sigma_y\), \(\sigma_z\) are the Pauli matrices. Gaussian states \(\rho_{\rm G}\) are fully characterized by their first-moments vector $v=\Tr[\rho_{\rm G}\hat{r}]$ and their covariance matrix $\Gamma$ with components \({\Gamma_{jk}=\Tr[\rho_{\rm G}\{\hat{r}_j-v_j,\hat{r}_k-v_k\}]}\). In this representation, the Robertson-Schrödinger uncertainty relation \( {\Gamma+i\Omega\geq0}\) must be satisfied. 

In the following, we consider the entanglement resource to be a two-mode Gaussian state with null first moments and covariance matrix
 \begin{equation}\label{covariance}
     \Gamma_{\rm AB}=
     \begin{pmatrix}
     a\mathbb{I}_2 &-c\sigma_z\\
     -c\sigma_z & b\mathbb{I}_2
     \end{pmatrix}\,,
 \end{equation}
with \({a,b\geq1}\) and \({0\leq c\leq\sqrt{ab-1-\left|a-b\right|}}\). This state can be generated using two-mode squeezing operations applied to uncorrelated thermal states~\cite{menzel2012path,Flurin2012, pogorzalek,Fedorov2021experimental, Abdo2025}. 
The entanglement can be quantified using the logarithmic negativity \({E_\mathcal{N}(\rho_{\rm AB})=\ln\|\rho_{\rm AB}^{T_{\rm A}}\|_1}\)~\cite{Vidal2002,Plenio2005}, where $\rho_{\rm AB}$ is the density matrix of the resource state and $T_{\rm A}$ denotes the partial transpose. For two-mode Gaussian states, this corresponds to \( {E_\mathcal{N}= \max\{0,-\ln{\left(\nu_-\right)}\}}\), 
where \(\nu_-\) is the lowest symplectic eigenvalue of the partial transposed covariance matrix \(\left(\mathbb{I}_2 \oplus \sigma_z\right)\Gamma_{\rm AB}\left(\mathbb{I}_2 \oplus \sigma_z\right)\)~\cite{AlessioSerafini2003}. Other measures of entanglement can be considered to quantify the teleportation resource~\cite{tserkis2018simulation}. However, here we consider the logarithmic negativity as it will give a simple optimality condition for the analog protocol. 

Gaussian-preserving channels are defined by two real matrices $X$ and $Y$ acting on the first-moments vector and covariance matrix as 
\begin{align}\label{Gaussianaction}
v\rightarrow{}Xv\;\;\;,\;\;\; \Gamma\rightarrow{}X\Gamma X^T+Y\,.
\end{align}
The complete positivity condition is \(Y+i\Omega\geq iX\Omega X^T\)~\cite{AlessioSerafini}. 
For single-mode Gaussian channels, this condition reduces to \({\sqrt{\det(Y)}\geq\left|1-\det(X)\right|}\) and \({Y\geq0}\). 

Throughout this work, we consider the case in which Alice and Bob share a Gaussian communication channel \(\mathcal{G}\), which can  be 
decomposed as  
\begin{equation}\label{diagonalization}
    \mathcal{G}[\rho]=W\left(\mathcal{E}\left[U\rho U^\dagger\right]\right)W^\dagger\,.
\end{equation}
Here, \(U\), \(W\) are two unitary matrices and \(\mathcal{E}\) is the Gaussian channel in the canonical form \cite{holevo2007one,Weedbrook2012}. Since the unitaries $U$ and $W$ can be reversed in the encoding and decoding stages, we can restrict the analysis to channels in their canonical form. Indeed, in the following, we focus only on non-degenerate channels, i.e., those channels that do not erase the information carried on one quadrature. Specifically, we consider phase-insensitive channels defined by
\begin{equation}\label{phasechannel}
    X=\sqrt{x}\mathbb{I}_2 \;\;\;,\;\;\;Y=y\mathbb{I}_2\,,
\end{equation}
where \({x>0}\) is the transmissivity and \(y\) is the added noise~\footnote{Referring to Table~I of Ref.~\cite{Weedbrook2012}, we do not consider quantum channels in the classes $A_1$ and $A_2$ as they are degenerate. The class $B_1$ can be treated as a phase-insensitive channel by adding noise in the $\hat{x}$ quadrature. Classes $B_2$ and $\mathcal{C}$ are of the form given in Eq.~\eqref{phasechannel}. Class $D$ acts as a noisy complex conjugation, which can be corrected at the encoding stage in a noisy way.}. The complete positivity condition for this channel is \({y\geq\left|1-x\right|}\). If ${y\geq e^{-2r}(1+x)}$, then applying the channel to a part of any state with logarithmic negativity larger than $2r$ will necessarily reduce the entanglement at the output. A particular case is ${r=0}$, for which ${y\geq 1+x}$ means the channel is entanglement-breaking.  

\section{Analog protocol}\label{analog}

\subsection{Gaussian Channel Simulation} Alice aims to transmit to Bob a quantum state obtained by applying a phase-insensitive Gaussian channel with gain $g$ and noise $G$ to an unknown input state $\rho_{\rm T}$, characterized by first and second moments $v_{\rm T}$ and $\Gamma_{\rm T}$, respectively. She shares with Bob an entangled state \(\rho_{\rm AB}\) with covariance matrix~\eqref{covariance}. The first-moment vector and the covariance matrix of the initial three-mode state are then $v= (v^T_{\rm T},
    0,0,0,0
)^T_{\rm AB}$ and $\Gamma=\Gamma_{\rm T}\oplus\Gamma_{\rm AB}$.
The analog protocol consists of three steps, see Fig.~\ref{Fig2}: (i) Alice applies an encoding operation to \(\rho_{\rm T}\) and her part of the entangled state; (ii) Alice transmits the encoded signal to Bob through the Gaussian communication channel, defined in \eqref{phasechannel}; (iii) Bob applies a decoding operation to retrieve a ``noisy'' version of \(\rho_{\rm T}\). 

\begin{figure}[t!]
    \centering
    \includegraphics[width=.47\textwidth]{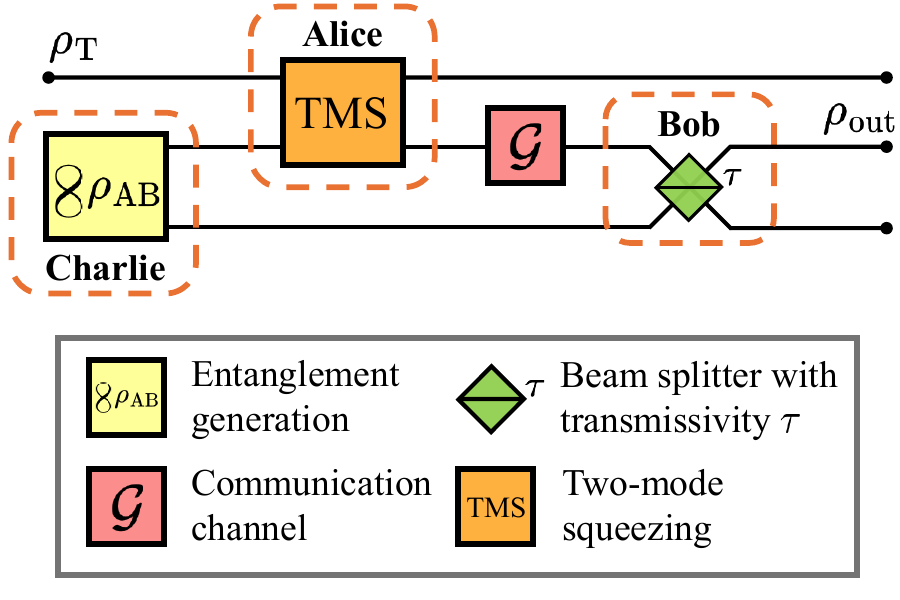}
    \caption{ \textbf{Analog  scheme for quantum state transfer}. Alice aims to transfer an unknown state $\rho_{\rm T}$ to Bob using the pre-shared entangled state $\rho_{\rm AB}$ as a resource. She applies a two-mode squeezing operation, parametrized by the linear gain ${d\geq1}$ (see Eq.~\eqref{encoding}), to her share of $\rho_{\rm AB}$ and the state $\rho_{\rm T}$ (with first moment $v_T$ and covariance matrix $\Gamma_{\rm T}$). She transmits the mode with first moment $\sqrt{d}\,v_{\rm T}$ to Bob through the communication channel $\mathcal{G}$, while discarding the other mode, which corresponds to its complex conjugate.  Finally, Bob performs a decoding operation using his part of the entangled state, obtaining $\rho_{\rm out}$, which is a ``noisy'' version of the state \(\rho_{\rm T}\). For infinite $d$, this is an Analog Quantum Teleportation protocol. For ${d=1}$, it is direct state transfer. For intermediate values of $d$, it is an HTDT protocol.}
    \label{Fig2}
\end{figure}

The role of the encoding operation is to partially correct the noise added by the communication channel $\mathcal{G}$. The encoding operation (labeled TMS in Fig.~\ref{Fig2}) is defined as  
\begin{equation}
   X_{\rm enc}=
    \begin{pmatrix}
        \sqrt{d}\mathbb{I}_2&\sqrt{d-1}\sigma_z\\
        \sqrt{d-1}\sigma_z&\sqrt{d}\mathbb{I}_2 \end{pmatrix} \;\;\;,\;\;\;Y_{\rm enc}=0\,,\label{encoding}
\end{equation}
where \(d\geq1\), and corresponds to a two-mode squeezing acting only on Alice’s mode. If $d$ is infinite, measuring the quadratures $\hat{x}$ and $\hat{p}$ on one of the outputs of the encoding operation is equivalent to a Bell measurement. Alice's encoded system consists of two modes with first moments $\sqrt{d}v_T$ and $\sqrt{d-1
}\sigma_z v_T$. We discard the complex conjugate mode (the one affected by $\sigma_z$), and transmit the other mode to Bob via the channel \eqref{phasechannel}.

Finally, as a decoding operation, Bob performs a beamsplitter operation on his part of the entangled mode and the received mode. This operation is defined by the matrices 
\begin{equation}
    X_{\rm dec}=
   \begin{pmatrix}
        \sqrt{\tau}\mathbb{I}_2&\sqrt{1-\tau}\mathbb{I}_2\\
        -\sqrt{1-\tau}\mathbb{I}_2&\sqrt{\tau}\mathbb{I}_2
        \end{pmatrix}
\;\;\;,\;\;\;Y_{\rm dec}=0\,,\label{decoding}
\end{equation}
with \(0<\tau\leq1\). It can be implemented, e.g., with a directional coupler in a microwave experiment~\cite{PhysRevLett.117.020502,Fedorov2021experimental, Abdo2025}.  One of the two outputs is
\begin{align}
&v_{\rm out}= \sqrt{g}v_{\rm T} \;\;\;,\;\;\;
    \Gamma_{\rm out}=g\Gamma_{\rm T}+G\mathbb{I}_2 \label{finaloutput}
\,,
\end{align}
where \(g=dx\tau\) is the total gain of the overall channel, and 
\begin{align}\label{G}
    G=&\left(g-\frac{g}{d}\right)a+\left(1-\frac{g}{dx}\right)b\nonumber\\\quad&-2c\sqrt{\left(1-\frac{g}{dx}\right)\left(g-\frac{g}{d}\right)}+\frac{gy}{dx}
\end{align}
is the total added noise. Notice that, since \({0<\tau\leq1}\), we must have \({0< g/dx\leq1}\) and, consequently, the squeezing parameter should satisfy the relation \({d\geq\max\{g/x,1\}}\).

In practice, the analog protocol {\it simulates} the action of a phase-insensitive Gaussian channel with gain $g$ and noise $G$ onto the state \(\rho_{\rm T}\). We point out that Eq.~\eqref{finaloutput} can be interpreted as follows: For a given channel gain $g$, which can be modulated via $d$ and $\tau$, the analog protocol will provide a noise $G$ given in \eqref{G}. This result can be generalized to the simulation of a generic non-degenerate Gaussian channel by exploiting the decomposition given in Eq.~\eqref{diagonalization}.

\subsection{Analog Protocol: A Hybrid Approach}\label{hybrid}

In the limit ${d\to \infty}$, we get ${G=ga+b-2\sqrt{g}c\equiv {G_{\rm qt}}}$, which matches the added noise in quantum teleportation for simulating a phase-insensitive channel with gain $g$~\cite{Adesso2017}. In this limit, it is an Analog Quantum Teleportation protocol, i.e., a quantum teleportation scheme that employs analog error correction to mitigate communication-channel losses. For ${d=1}$, it reduces to direct state transfer. For intermediate values of $d$, the protocol is a hybrid of teleportation and direct transmission (HTDT). To illustrate this, consider a quantum teleportation protocol that uses only Alice's discarded mode (see Fig.~\ref{Fig2}), where Alice performs a heterodyne measurement on this mode and sends the result to Bob, who then applies an appropriate unitary operation to realize an overall channel with gain $g$ (i.e., to \emph{simulate} a channel with gain $g$). The resulting noise $G_{\rm dis}$ is given by 
\begin{align}
G_{\rm dis} = \frac{gd}{d-1}a + b - 2c\sqrt{\frac{gd}{d-1}}+\frac{g}{d-1}.
\end{align}
For any choice of $g$ and triplets ${(a,b,c)}$, we have that ${G/G_{\rm dis}\to1}$ for ${d\to\infty}$, ${G/G_{\rm dis}\to 0}$ for ${d\to1}$, and ${G/G_{\rm dis}<1}$ for ${1<d<\infty}$.

In the limit ${d \to \infty}$, the discarded and transmitted modes become perfectly correlated. A heterodyne measurement on the discarded mode collapses Alice's transmitted mode onto a coherent state whose amplitude is equal to the complex conjugate of the measurement outcome multiplied by a known, large gain factor. This measurement also projects Bob’s mode onto the target state up to a displacement determined by the complex conjugate of the measurement outcome. Notice that since the measurement on the discarded mode acts as a Bell measurement on the unknown input state and Alice’s share of the entanglement resource, and its outcome is correlated with the transmitted mode, it constitutes a QND measurement in the Bell basis. 

When the discarded mode is measured, there are two ways to transmit the resulting information to Bob. In standard teleportation, one would transmit the measurement outcome through a classical channel, and the performance is characterized by $G_{\rm dis}$. In the analog protocol, instead, the collapsed transmitted mode itself is sent through the quantum channel, with overall noise ${G=G_{\rm qt}}$. Since ${G_{\rm dis} \to G_{\rm qt}}$ in the limit ${d \to \infty}$, the two approaches are equivalent. The only difference is the way classical information is transmitted: The analog protocol transmits the information using an amplified coherent state, with the amplification serving as analog error correction against the channel noise. 

It is essential to note that the analog scheme remains deterministic: in practice, no measurement on the discarded mode is actually required, and simply tracing it out is sufficient. However, the measurement-based picture is conceptually useful, as it clarifies why the protocol reduces to quantum teleportation in this limit. The underlying concept is referred to in the literature as ``all-optical quantum teleportation''~\cite{Ralph1999} or ``analog feedforward''~\cite{Dicandia2015}. Experimental implementations of this limit include Refs.~\cite{Liu2020Orbital, Liu2024} in the optical regime and Refs.~\cite{Fedorov2021experimental,yam2025quantum,Abdo2025} in the microwave regime.

For ${1 < d < \infty}$, the correlations between the discarded and transmitted modes are not perfect, so the measurement on the discarded mode only partially determines the state of the transmitted mode. The information obtained from measuring the discarded mode is less strongly correlated with Bob's state, resulting in a larger noise $G_{\rm dis}$. Here, ${G/G_{\rm dis}<1}$ indicates that the analog protocol is a hybrid of quantum teleportation and direct transmission. In the limit ${d\to1}$, we have that ${G/G_{\rm dis}\to0}$, and the protocol realizes purely direct transmission of the unknown state.

\subsection{Performance analyses}\label{main}
We aim to identify the regimes in which HTDT outperforms quantum teleportation, whether digital or analog, in simulating a phase-insensitive channel. Such performance is quantified by the amount of added noise $G$, i.e., the less noise, the better the simulation. 
Optimizing $G$ with respect to the triplet $(a,b,c)$ and the squeezing parameter $d$ is challenging. There are trivial examples, such as the ${g=x}$ case, for which ${G=G_{\rm qt}+(y-G_{\rm qt})/d}$. This expression is linear in $1/d$ and is therefore minimized by ${d=1}$ or ${d\to\infty}$ depending on whether $G_{\rm qt}$ is greater or smaller than $y$. In this example, the problem reduces to optimizing $G_{\rm qt}$. 

Another simple case is when there is no pre-shared entanglement between Alice and Bob. We fix Alice and Bob's states to be pure, i.e., $(a,b,c)=(1,1,0)$. The encoding operation corresponds to a quantum-limited phase-insensitive amplification of the state $\rho_{\rm T}$. The protocol is implemented by sending the amplified state directly through the communication channel, obtaining  
\begin{align}\label{gdir}
 G_{\rm ef}&=1+g+\left[\frac{y-(1+x)}{x}\right]\frac{g}{d}\,.  
\end{align}
This can be optimized over $d$ with the constraint ${d\geq\max\{g/x,1\}}$. Since \eqref{gdir} is linear in ${1/d}$, $G_{\rm ef}$ is minimized for ${d\to\infty}$ if ${y\geq1+x}$ (i.e., if the channel is entanglement-breaking), or by ${d=\max\{g/x,1\}}$ if ${y<1+x}$. This case represents direct state transfer enhanced by local encoding/decoding operations.
Notice that $G_{\rm ef}$ can be lower than $G_{\rm qt}$ if the transmissivity of the communication channel is large enough and the added noise is sufficiently small. In fact, while $G_{\rm qt}$ is limited by the entanglement, $G_{\rm ef}$ is limited by the communication channel parameters.

Let us now consider the case in which Alice and Bob share some entanglement, which we take as the resource state. For the moment, we do not consider how this entanglement is generated or distributed by Charlie. Given a fixed logarithmic negativity of the resource state, $E_{\mathcal{N}}(\rho_{\rm AB}) = 2r$, one can derive a necessary and sufficient condition for the figure of merit $G$ to be minimized at a finite value of $d$.

\begin{theo*}\label{theorem1}
Consider a Gaussian communication channel $\mathcal{G}$ with transmissivity ${x>0}$ and added noise $y$, as in described in Eq.~\eqref{phasechannel}, and assume that Alice and Bob share an entangled resource with logarithmic negativity $2r$.  HTDT  outperforms quantum teleportation—optimized over the set of resource entangled states—in the task of simulating Gaussian phase-insensitive channels with gain ${g\in[\tanh(r),\coth(r)]}$, if and only if ${y< e^{-2r}\left(1+x\right)}$.
\end{theo*}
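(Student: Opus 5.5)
We begin with the benchmark. Quantum teleportation, whether digital or analog in the limit $d\to\infty$, yields $G_{\rm qt}=ga+b-2\sqrt{g}\,c$. We first minimize this over all admissible triplets $(a,b,c)$ with fixed logarithmic negativity $E_\mathcal{N}=2r$.

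For a state of the form \eqref{covariance}, the smallest partially transposed symplectic eigenvalue is
\[
\nu_-=\tfrac12\Big[a+b-\sqrt{(a-b)^2+4c^2}\Big],
\]
and we impose $\nu_-=e^{-2r}$. We expect the optimum over this constraint surface to be attained at a pure two-mode squeezed vacuum with $a=b=\cosh 2r$ and $c=\sinh 2r$. This should hold exactly when $g\in[\tanh r,\coth r]$, which is why the gain interval appears in the statement. The resulting value is the known teleportation simulation bound
\[
G_{\rm qt}^{\min}=e^{-2r}(1+g).
\]
We plan to prove it in two steps. First, $G_{\rm qt}\ge \nu_-(1+g)$ for any triplet: the expression $ga+b-2\sqrt g c$ is the variance of a weighted EPR-type quadrature combination, which is bounded below by $\nu_-(1+g)$ through the standard inequality for the partially transposed covariance matrix. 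Second, equality is reached by a suitable (possibly asymmetric, thermal-added) state whenever $g$ lies in the interval. We would cite Refs.~\cite{Adesso2017,tserkis2018simulation} for this step, or check it directly by Lagrange multipliers.

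Next, we need the best value of $G$ at finite $d$. In Eq.~\eqref{G} we set $u=g/d\in(0,\min\{g,x\}]$, and optimize the triplet at fixed negativity in the same way. The terms $(g-u)a+(1-u/x)b-2c\sqrt{(g-u)(1-u/x)}$ have the same structure as $G_{\rm qt}$, with effective weights $g-u$ and $1-u/x$. The same lower bound therefore gives
\[
(g-u)a+(1-u/x)b-2c\sqrt{(g-u)(1-u/x)}\ \ge\ e^{-2r}\big[(g-u)+(1-u/x)\big],
\]
with equality attainable for the appropriate triplet. The optimized noise is then
\[
G^{\min}(u)=e^{-2r}(1+g)+\frac{u}{x}\Big[y-e^{-2r}(1+x)\Big],
\]
which is linear in $u$. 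Because $u>0$ is allowed (finite $d$), HTDT strictly beats $G_{\rm qt}^{\min}$ if and only if the bracket is negative, that is, $y<e^{-2r}(1+x)$. Conversely, if $y\ge e^{-2r}(1+x)$, every finite $d$ gives $G\ge G_{\rm qt}^{\min}$, since the lower bound holds for every triplet and not only for the optimized one. This proves both directions.

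The main obstacle is the attainability claim. The effective ratio $(g-u)/(1-u/x)$ changes with $u$, while the equality condition requires that ratio to lie in $[\tanh^2 r,\coth^2 r]$, in analogy with $g\in[\tanh r,\coth r]$. For the ``if'' direction it is enough to take $u$ small. Then the ratio is close to $g$ and, by continuity, stays inside the interval whenever $g$ is in the open interval; the boundary values of $g$ need a separate one-sided argument. So attainability is only needed for small $u$, and the lower bound alone handles the ``only if'' direction. We would first verify the lower bound by writing $\nu_-(1+g)\le ga+b-2\sqrt g c$ as positivity of a $2\times2$ quadratic form, and would handle the endpoint cases separately if needed.
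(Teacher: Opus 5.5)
Your lower bound and your ``only if'' direction match the paper's proof. With $\nu=e^{-2r}$ the negativity constraint reads $c^2=(a-\nu)(b-\nu)$, and your $2\times2$ positive-semidefinite form, evaluated at the weights $(g-u,\,1-u/x)$, is exactly the paper's completed square:
\[
G=\bigl[\sqrt{(g-u)(a-\nu)}-\sqrt{(1-u/x)(b-\nu)}\bigr]^2+\nu(1+g)+\tfrac{u}{x}\bigl[y-\nu(1+x)\bigr].
\]
You differ in the ``if'' direction. You re-optimize the resource at each finite $d$, which is why you need attainability at the shifted ratio $g_{\rm eff}=(g-u)/(1-u/x)$. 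The paper instead keeps the triplet that is already optimal for teleportation at gain $g$, i.e.\ $g(a-\nu)=b-\nu\equiv\beta$ as in Eq.~\eqref{entanglementtriplet}, and expands in $1/d$. The square then equals $\beta\bigl(\sqrt{1-u/g}-\sqrt{1-u/x}\bigr)^2=O(u^2)$, so the linear term $\frac{g}{dx}[y-\nu(1+x)]$ alone decides the sign and your ``main obstacle'' never arises. Your route costs an attainability lemma. In exchange it gives the exact optimum $G^{\min}(u)$ over all resources of negativity $2r$, for every $u$ at which $g_{\rm eff}(u)$ is admissible, not just the sign of the first-order correction.

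Three corrections.

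First, by homogeneity $w_1a+w_2b-2\sqrt{w_1w_2}\,c=w_2\bigl(g_{\rm eff}a+b-2\sqrt{g_{\rm eff}}\,c\bigr)$. So $g_{\rm eff}$ plays exactly the role of $g$ and must lie in $[\tanh r,\coth r]$, not $[\tanh^2 r,\coth^2 r]$. Your continuity argument survives, but the squared range would certify attainability where $\nu(1+g_{\rm eff})<|1-g_{\rm eff}|$, which no physical channel achieves.

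Second, the pure two-mode squeezed vacuum is optimal only at $g=1$, since it gives $G_{\rm qt}-\nu(1+g)=\sinh(2r)(1-\sqrt g)^2$. The asymmetric thermal-added states you mention next are the right family.

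Third, drop the planned endpoint argument, because the claim fails there. At $g=\tanh r$ and $g=\coth r$ one has $e^{-2r}(1+g)=|1-g|$, which is the minimal added noise of any phase-insensitive channel with gain $g$. The HTDT map is such a channel, so $G<G^*_{\rm qt}$ is impossible at the endpoints. The ``if'' direction therefore holds only on the open interval. This is consistent with $b^*$ in Eq.~\eqref{entanglementtriplet} diverging at the endpoints, a restriction the paper's proof leaves implicit.
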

\begin{proof}
An optimization of $G_{\rm qt}$, considering as entanglement resource a state with logarithmic negativity $2r$, has been found in \cite{Adesso2017}. We first set ${c=\sqrt{(a-e^{-2r})(b-e^{-2r})}}$ in order to have ${E_{\mathcal{N}}(\rho_{\rm AB})=2r}$, obtaining
\begin{align}
G_{\rm qt} = \left[(ga-ge^{-2r})^{\frac{1}{2}}-(b-e^{-2r})^{\frac{1}{2}}\right]^2+G_{\rm qt}^*,
\end{align}
where $G_{\rm qt}^*=e^{-2r}(1+g)$. Then, we fix $a$ such that $G_{\rm qt}=G_{\rm qt}^*$, which is the boundary of the channels accessible by quantum teleportation.  Overall, we get the triplet
\begin{align}\label{entanglementtriplet}
    a&=\frac{b+e^{-2r}(g-1)}{g}\;\;\;,\;\;\; c= \frac{b-e^{-2r}}{\sqrt{g}}\,,\nonumber\\
    b&\geq\frac{e^{2r}g+e^{-2r}-|g-1|}{g+1-e^{2r}|g-1|}\equiv b^*\,,
\end{align}
where ${b\geq b^*}$ is needed to satisfy the constraint that \(\rho_{\rm AB}\) is physical, and the solution holds when ${\tanh(r)\leq g\leq\coth(r)}$~\cite{Adesso2017}. Notice that ${b=b^*}$ corresponds to the optimal $\rho_{\rm AB}$ with minimal energy.

Let us now consider the noise $G$ in \eqref{G} with generic parameters. On the one hand, if we set the logarithmic negativity of $\rho_{\rm AB}$ to $2r$, Eq.~\eqref{G} can be written as
\begin{align}
G=&\left[\left(g-\frac{g}{d}\right)^{\frac{1}{2}}(a-e^{-2r})^{\frac{1}{2}}-\left(1-\frac{g}{dx}\right)^{\frac{1}{2}}(b-e^{-2r})^{\frac{1}{2}}\right]^2\nonumber\\\quad&+\frac{g}{dx}\left[y-e^{-2r}(1+x)\right]+G_{\rm qt}^*.
\end{align}
Therefore, if ${y\geq e^{-2r}(1+x)}$ we have $G\geq G_{\rm qt}^*$ for any choice of $a$, $b$, and $d$, and quantum teleportation with the set of parameters in \eqref{entanglementtriplet} is optimal.

On the other hand, we can expand $G$ to the first order in $1/d$ for $d\to\infty$, obtaining
\begin{align}
    G&=G_{\rm qt}+\frac{G'}{d}+o\left(\frac{1}{d}\right)\,,\label{G1}\\\quad
    G'&=-g\left(a+\frac{b}{x}\right)+\frac{c\sqrt{g}(g+x)}{x}+\frac{gy}{x}\,.\label{Gprime}
\end{align}
The condition ${G'<0}$ computed using any triplet in Eq.~\eqref{entanglementtriplet} is sufficient for ${G<G_{\rm qt}^*}$ to hold for a finite $d$. Such condition is equivalent to ${y<e^{-2r}(1+x)}$. 
\end{proof}

Notice that $G'<0$ in Eq.~\eqref{Gprime} gives a sufficient condition for the analog protocol to be optimal for any fixed $a$, $b$, and $c$, not necessarily optimized for teleportation. This condition is equivalent to ${y<ax+b-c(g+x)/\sqrt{g}}$.

This result can be generalized to situations where the communication channel is described by Eq.~\eqref{diagonalization} that have a canonical form given in \eqref{phasechannel}. It depends uniquely on the structure of the communication channel, and establishes that a necessary and sufficient condition for quantum teleportation to be optimal in the task of simulating Gaussian channels is that the communication channel reduces entanglement when applied to a part of {\it any} bipartite entangled state with logarithmic negativity $2r$. The result can be rephrased as addressing the following question: Given a communication channel that cannot be simulated with quantum teleportation, can it be used to implement a target channel that is also not simulatable with teleportation and differs from the original communication channel?

\section{Optimal quantum state transfer}\label{opt_state_main}

\subsection{Gaussian Codebooks of Coherent States}
We consider quantum state transfer of a Gaussian distributed codebook of coherent states $|\alpha\rangle$.  The fidelity averaged on the codebook can be computed as $\overbar{\mathcal{F}}=\int_{\mathbb{C}}\frac{\lambda}{\pi}e^{-\lambda |\alpha|^2}\langle \alpha | \rho_{\rm out} | \alpha\rangle d^2\alpha$ with $\lambda>0$. Using Eq.~\eqref{finaloutput}, we obtain ${\overbar{\mathcal{F}}=2\lambda/[2(1-\sqrt{g})^2+\lambda(1+g+G)]}$~\cite{Adesso2017}. This expression shall be optimized with respect to the gain $g$ and the overall noise $G$. However, it is clear that for a fixed $g$, a lower $G$ is beneficial. For the sake of the discussion, let us consider the ${\lambda\to0}$ limit, i.e., the limit of a uniformly distributed codebook. In this case, $\overbar{\mathcal{F}}$ is zero for ${g\neq1}$, while it is ${2/(2+G)}$ for ${g=1}$. 

In the following, we focus on the case of the quantum-limited attenuator as a communication channel, defined by Eq.~\eqref{phasechannel} with ${0< x\leq1}$ and ${y=1-x}$. This channel plays a crucial role in, e.g., superconducting microwave technology, where the communication can be implemented with a cryogenic link~\cite{walraff2020, yam2023}. 

\subsection{Noiseless Entanglement Distribution}\label{opt_state}

We first considered the case in which the entanglement distribution channel is different from the communication channel. In general, these channels need not coincide, as the entanglement distribution channel may rely on dedicated infrastructure and therefore be less noisy. For instance, in an optical implementation, entanglement may be distributed through optical fibers, where power loss rates can be as low as $10^{-4}$~dB/m~\cite{petrovich2025broadband}, while communication occurs in free space, resulting in substantially larger loss rates. In this scenario, the shared entanglement between Alice and Bob is only mildly affected by losses, whereas the communication channel can experience significant attenuation. 

We set ${g=1}$ and consider the triplet optimizing $G_{\rm qt}$ with minimal energy, i.e., $(a,b,c)=(\cosh(2r),\cosh(2r),\sinh(2r))$, see Eq.~\eqref{entanglementtriplet}. Note that, while choosing a non-pure state satisfying Eq.~\eqref{entanglementtriplet} (i.e., ${a = b > \cosh(2r)}$) leaves $G_{\rm qt}$ unchanged, this choice affects the performance of the analog protocol in the regime where a finite $d$ is optimal, in the sense that the entanglement-free protocol can become optimal for finite transmissivities.
Nevertheless, the region where the analog protocol with finite $d$ encoding is optimal is still given by the Theorem in Section~\ref{main}.

\begin{figure}[t!]
    \centering
\includegraphics[width=0.47\textwidth]{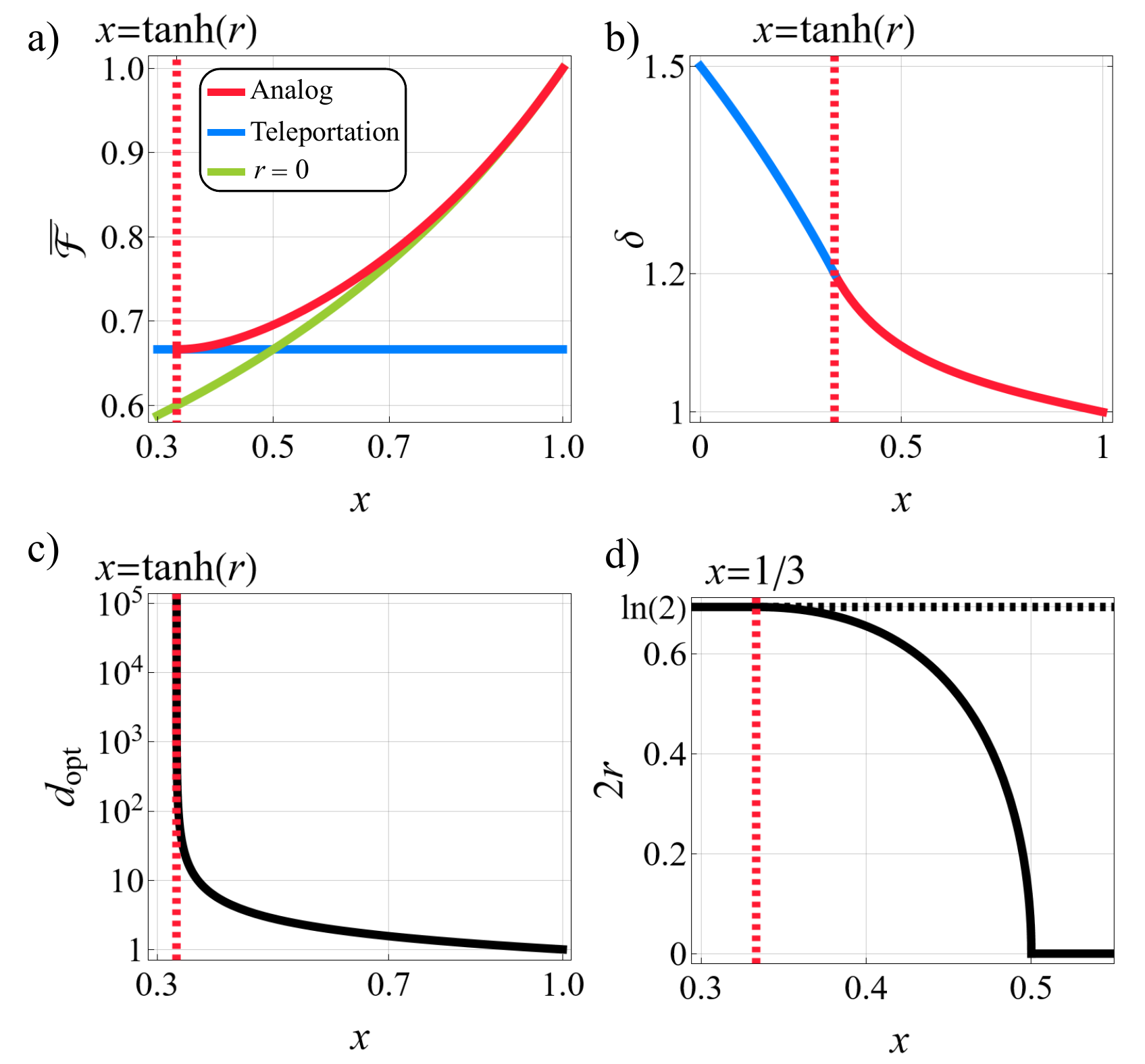}
    \caption{\textbf{Average fidelities, optimal encoding, and no-cloning threshold.} \textbf{a)} Average fidelity $\overbar{\mathcal{F}}$ of HTDT optimized over $d$ (red), quantum teleportation (blue), and entanglement-free case (green), for the state transfer of a codebook of uniformly distributed coherent states. We consider a quantum-limited attenuator with transmissivity $x$ as a communication channel, and 
 the optimal triplet for quantum teleportation ${(a,b,c)=(\cosh(2r),\cosh(2r),\sinh(2r))}$ as resource. In {\bf b)}, we plot the infidelities ratio \({\delta=(1-\overbar{\mathcal{F}}_{\rm ef})/(1-\overbar{\mathcal{F}}_{\rm an})}\), showing the advantage of the quantum teleportation (blue) and HTDT (red) over the entanglement-free case. {\bf c)} Plot of the optimal $d$ for HTDT, i.e., \({d_{\rm opt}=[x-\tanh^2(r)
]/[x^2-\tanh^2(r)]}\). In {\bf a)} and {\bf c)} we set ${2r=\ln(2)\simeq 0.693}$, such that ${\overbar{\mathcal{F}}_{\rm qt}=2/3}$ matches the no-cloning threshold~\cite{Cerf2000}. Changing $r$ results in qualitatively similar plots. {\bf d)} Plot of the logarithmic negativity $2r$ needed to reach the no-cloning threshold for the optimal analog protocol (continuous) and quantum teleportation (black dashed) protocols.}
    \label{Fig3}
\end{figure}

The resulting average fidelity for quantum teleportation is ${\overbar{\mathcal{F}}_{\rm qt}=1/(1+e^{-2r})}$. For the analog protocol optimized over $d$, we obtain ${\overbar{\mathcal{F}}_{\rm an}=\overbar{\mathcal{F}}_{\rm qt}}$ for ${x\leq \tanh(r)}$ and ${\overbar{\mathcal{F}}_{\rm an}= 2x/[1+x(2-x)-(1-x)^2\cosh(2r)]}$ for  ${x>\tanh(r)}$. This shall be compared with the average fidelity of the entanglement-free case, i.e., ${\overbar{\mathcal{F}}_{\rm ef}=1/(2-x)}$. Notice that ${\overbar{\mathcal{F}}_{\rm ef}\leq \overbar{\mathcal{F}}_{\rm an}}$ always holds, and the equality is achieved at ${x=1}$. However, $\overbar{\mathcal{F}}_{\rm ef}$ is larger than $\overbar{\mathcal{F}}_{\rm qt}$ as soon as ${x>1-e^{-2r}}$. 

We plot these fidelities in Fig.~\ref{Fig3} for ${2r=\ln(2)}$, which makes quantum teleportation reach the no-cloning threshold ${\overbar{\mathcal{F}}_{\rm qt}=2/3}$~\cite{Cerf2000}. Fig.~\ref{Fig3}(a) illustrates how the analog protocol with finite $d$ (i.e., HTDT) outperforms quantum teleportation for ${x>\tanh(r)}$, while Fig.~\ref{Fig3}(c) plots the optimal value of $d$.  In Fig.~\ref{Fig3}(d), we show the entanglement required by our analog protocol to reach the no-cloning threshold. Here, we identify the ${1/3<x<1/2}$ region, where the needed entanglement is ${2r=2\arcsinh\left[\sqrt{x(1-2x)/2(1-x)^2}\right]}$, as transitional to the region where entanglement is no longer necessary.

\subsection{Noise in Entanglement Distribution}\label{distribution} 
We now discuss the effect of noise and dissipation on entanglement distribution. As in the previous section, we focus on the task of state transfer of a uniformly distributed codebook of coherent states, where having $g=1$ and a symmetric entangled state between Alice and Bob is optimal. We assume that both the distribution and communication channels are quantum-limited attenuators (or pure-loss channels) with the same power loss rate $\gamma$ (expressed in dB/m), so that the power transmission between nodes X and Y at distance $D_{\rm XY}$ (expressed in m) is ${x_{\rm XY}=10^{-\frac{\gamma}{10}D_{\rm XY}}}$, see Fig.~\ref{Fig4}. In this setting, the triangle inequality is equivalent to $x_{\rm AB}\geq x_{\rm CA}x_{\rm CB}$. We consider Charlie to be equidistant from Alice and Bob, so that we have a symmetric entangled state at the level of Alice and Bob. More general configurations require specialized treatment, where an optimization over Charlie's resources is performed.

Let us consider Charlie to generate a pure state with entanglement $2 r_{\rm C}$. The logarithmic negativity between Alice and Bob is obtained by computing the smallest symplectic eigenvalue of the state distributed from Charlie to Alice and Bob through the attenuator channels, obtaining ${2r=-\log\left(1-x_{\rm C}+x_{\rm C}e^{-2r_{\rm C}}\right)}$,  where ${x_{\rm CA}= x_{\rm CB}\equiv x_{\rm C}}$. Notice that the entanglement saturates at ${2r=-\log\left(1-x_{\rm C}\right)}$ for ${2r_{\rm C}\gg1}$. 

\begin{figure}[t!]
    \centering
    \includegraphics[width=0.35\textwidth]{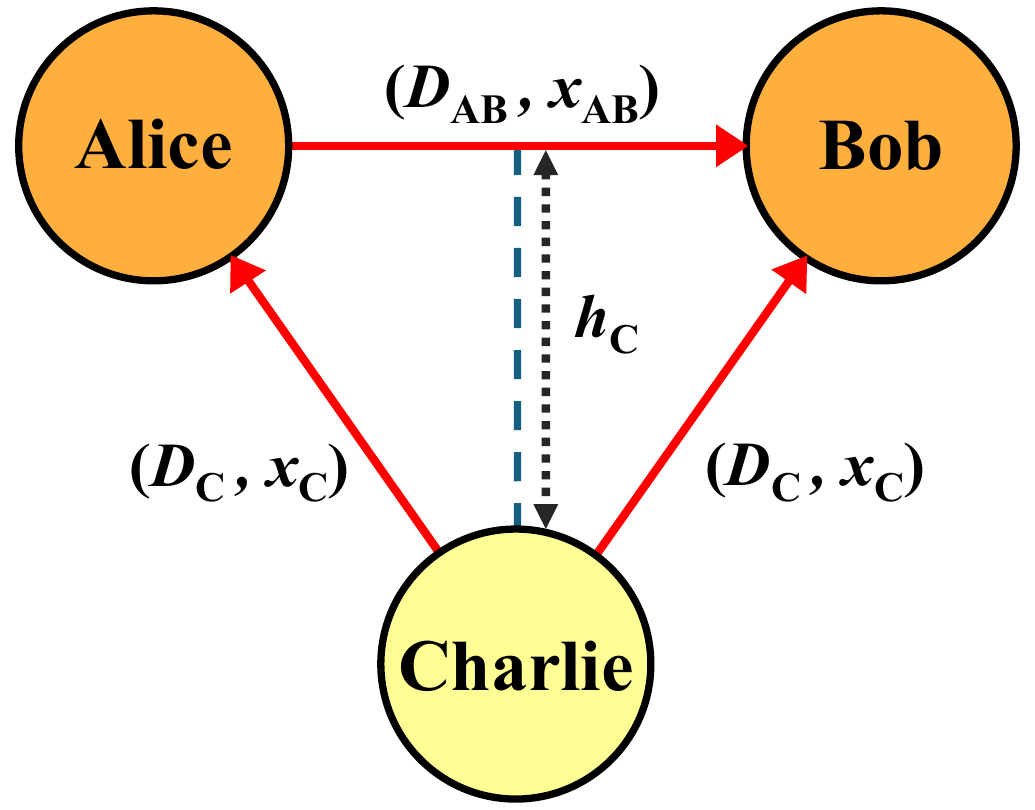}
    \caption{{\bf Alice–Bob–Charlie configuration.} We consider a configuration in which Charlie is equidistant from Alice and Bob, i.e., ${D_{\rm CA} = D_{\rm CB} \equiv D_{\rm C}}$. All channels have the same loss rate $\gamma$, so that ${x_{\rm CA}=x_{\rm CB}\equiv x_{\rm C}}$. The quantity $h_{\rm C}$ denotes Charlie’s distance from the Alice–Bob axis.}
    \label{Fig4}
\end{figure}
Using the Theorem derived in Section~\ref{main}, we obtain the condition ${2r_{\rm C}<-\log\left(1-\frac{2x_{\rm AB}}{x_{\rm C}(1+x_{\rm AB})}\right)}$ for an analog protocol with {\it finite} encoding squeezing (i.e., HTDT) to outperform quantum teleportation. Since ${x_{\rm C}\leq \sqrt{x_{\rm AB}}}$ always holds, we have that for ${2r_{\rm C}<-\log\left(1-\frac{2\sqrt{x_{\rm AB}}}{1+x_{\rm AB}}\right)}$ an HTDT protocol is optimal for any position of Charlie.

Let us now consider typical values of the attenuation coefficient and entanglement in realistic platforms. In superconducting microwave technology, niobium–titanium coaxial cables exhibit power losses as low as ${\gamma=10^{-3}}$~dB/m at frequencies around $5$~GHz~\cite{Fedorov2021experimental,yam2025quantum}. While we focus on the microwave superconducting case, a similar analysis can be carried out for optical fibers, which can exhibit even lower loss rates, ${\gamma \simeq 10^{-4}}$~dB/m~\cite{petrovich2025broadband}, allowing for significantly longer transmission distances. In Fig.~\ref{Fig5}, we plot the average fidelities for the optimized analog protocol, quantum teleportation, and the entanglement-free case for ${D_{\rm AB}\simeq1.5}$~km. Notice that, since the entangled state shared by Alice and Bob is {\it not} pure, a protocol that uses no entanglement at all can outperform the analog protocol that relies on entanglement. There is a wide range of entanglement values for which HTDT is optimal. Of particular interest is the case ${2r_{\rm C} \simeq 2.1}$, corresponding to the logarithmic negativity reported in recent microwave experiments~\footnote{In Ref.~\cite{Abdo2025}, the logarithmic negativity is defined using a logarithm in base~2 rather than the natural logarithm. Converting to our convention, they report a logarithmic negativity between Alice and Bob of \({2r \simeq 1.04}\), with the entanglement distributed asymmetrically. At the source, the state is symmetric and exhibits a maximum squeezing below the vacuum level for the joint quadrature of \(9\,\mathrm{dB}\), which corresponds to a logarithmic negativity at Charlie of \(2r_{\mathrm{C}} \simeq 2.07\).}. Here, HTDT still significantly outperforms teleportation, and the optimal amplification is \(d \simeq 3.1\) for \(h_{\mathrm{C}} \simeq 0~\mathrm{m}\) (Charlie located exactly midway between Alice and Bob). This value is experimentally achievable without introducing significant additional noise~\cite{menzel2012path,zhong2013squeezing,huo2018deterministic}. For larger values of $2r_{\rm C}$, the advantage of HTDT over quantum teleportation (${d\to \infty}$) gradually diminishes.

\begin{figure}[t!]
    \centering
    \includegraphics[width=0.48\textwidth]{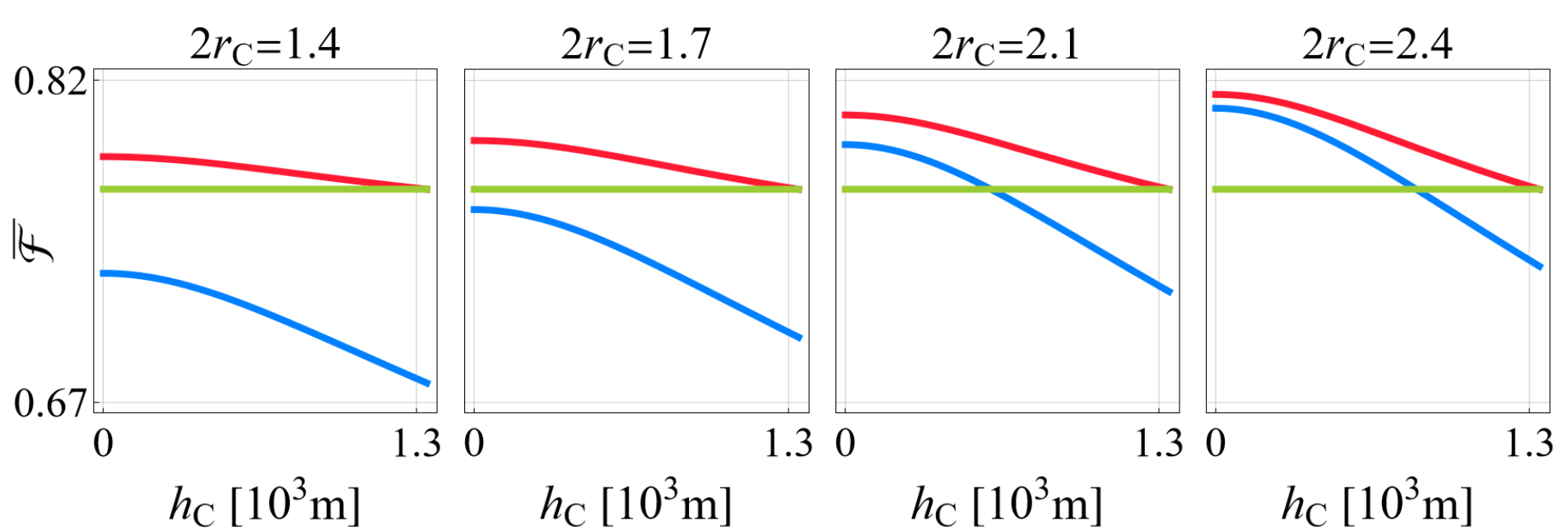}
    \caption{{\bf Average fidelity versus Charlie’s position.} We plot the average fidelities for the analog protocol optimized over the encoding squeezing (red), quantum teleportation (blue), and the entanglement-free case (green) for various values of Charlie's logarithmic negativity ($2r_{\rm C}$). The results are shown for the configuration in Fig.~\ref{Fig4} ($D_{\rm CA} = D_{\rm CB} \equiv D_{\rm C}$) as a function of Charlie's vertical position ($h_{\rm C}$), for parameter values ${\gamma=10^{-3}}$~dB/m, ${x_{\rm AB}=0.7}$ (corresponding to ${D_{\rm AB}\simeq 1.5}$~km), and a symmetric pure state at Charlie. The red curves are optimized for a {\it finite} encoding squeezing, so HTDT always outperforms quantum teleportation in the shown regimes.}
    \label{Fig5}
\end{figure}
\section{Discussion}\label{discussion}

In this work, we studied an analog protocol for simulating non-degenerate Gaussian channels. The protocol interpolates between quantum teleportation and direct transmission, realizing a hybrid teleportation–direct transmission (HTDT) scheme for intermediate values of the encoding parameter. We derived a necessary and sufficient condition under which this hybrid setup is optimal (see Theorem in Section~\ref{main}).
Our result is relevant as it mitigates noise in cases where entanglement resources and/or losses in the communication channel are limited. In this context, we have demonstrated that an HTDT protocol can enhance the average fidelity achievable with quantum teleportation. This has been proved both in the case when the entanglement distribution channel is virtually noiseless (see Section~\ref{opt_state}), and in the  ``fair'' case where all channels have the same loss rate (see Section~\ref{distribution}). We have shown that for distances between Alice, Bob, and Charlie on the order of kilometers, HTDT outperforms quantum teleportation using experimentally achievable parameters. 

Taken together, these results demonstrate the overall practical relevance of the proposed hybrid protocol, even under equivalent channel conditions. They directly connect our analysis to existing microwave and optical quantum communication platforms, where analog feedforward has already been demonstrated experimentally~\cite{Fedorov2021experimental,yam2025quantum,Abdo2025,Liu2020Orbital,Liu2024}. In particular, our treatment explicitly includes superconducting microwave implementations, where cryogenic links operating at $20$~mK~\cite{walraff2020,yam2023} realize quantum-limited attenuating channels. Such cryogenic interconnects are a key technological element in modular quantum computing architectures based on superconducting qubits, making our framework naturally applicable in this context.

Finally, although this work has focused on continuous-variable systems, the result may also be extended to discrete-variable settings. The encoding operation in our analog protocol can be viewed as a quantum non-demolition (QND) measurement in the Bell basis~\cite{Radim2009,Radim2010}, see Section~\ref{hybrid}. In the discrete regime, a comparable QND measurement in the Bell basis, appropriately optimized for the specific noise characteristics, should be implemented on Alice’s side. 

{\it Acknowledgments---} We thank Radim Filip for discussions on the all-optical quantum teleportation~\cite{Ralph1999} and a potential discrete variable extension of the analog protocol. U.A. and R.D. acknowledge financial support from the Academy of Finland, grants no. 349199, 353832, and 368477. S.F. acknowledges financial support from PNRR MUR project PE0000023-NQSTI financed by the European Union – Next Generation EU, and from CQSense project financed by Fondazione Compagnia di San Paolo.

\bibliography{bibliography.bib}

\end{document}